\newcommand*\patchAmsMathEnvironmentForLineno[1]{%
  \expandafter\let\csname old#1\expandafter\endcsname\csname #1\endcsname
  \expandafter\let\csname oldend#1\expandafter\endcsname\csname end#1\endcsname
  \renewenvironment{#1}%
     {\linenomath\csname old#1\endcsname}%
     {\csname oldend#1\endcsname\endlinenomath}}%
\newcommand*\patchBothAmsMathEnvironmentsForLineno[1]{%
  \patchAmsMathEnvironmentForLineno{#1}%
  \patchAmsMathEnvironmentForLineno{#1*}}%
\newtheorem{theorem}{Theorem}[section]
\newtheorem{proposition}[theorem]{Proposition}
\newtheorem{lemma}[theorem]{Lemma}
\theoremstyle{remark}
\newtheorem{remark}[theorem]{Remark}
\newcommand{\E}{\ensuremath{\mathbb{E}}}
\newcommand{\pr}{\mathrm{Pr}}
\title{Matching random colored points with rectangles\\
(Corrigendum)}
\author{
Josu\'e Corujo\thanks{
	Univ Paris Est Créteil, Univ Gustave Eiffel, CNRS, LAMA UMR 8050, F-94010 Créteil, France
	 {\tt josue.corujo-rodriguez@u-pec.fr}.
	ORCID: 0000-0002-3997-7391.
}
\and
Paul Horn\thanks{Department of Mathematics, College of Natural Sciences and Mathematics, University of Denver, USA {\tt paul.horn@du.edu}.
	ORCID: 0000-0003-3022-9036.
}
\and
Pablo P\'erez-Lantero\thanks{\emph{Corresponding author}. Universidad de Santiago de Chile (USACH), Facultad de Ciencia, Departamento de Matem\'atica y Ciencia de la Computaci\'on, Chile. {\tt pablo.perez.l@usach.cl}. ORCID: 0000-0002-8703-8970.}
}
\begin{document}

\maketitle

\begin{abstract}
Given $n>0$, let $S\subset [0,1]^2$ be a set of $n$ points, chosen uniformly at random. Let $R\cup B$ be a random partition, or coloring, of $S$ in which each point of $S$ is included in $R$ uniformly at random with probability $1/2$. Corujo et al.~(JOCO 2023) studied the random variable $M(n)$ equal to the number of points of $S$ that are covered by the rectangles of a maximum matching of $S$ using pairwise-disjoint rectangles. Each rectangle is axis-aligned and covers exactly two points of $S$ of the same color. They designed a deterministic algorithm to match points of $S$, and the algorithm was modeled as a discrete stochastic process over a finite set of states. After claiming that the stochastic process is a Markov chain, they proved that almost surely $M(n)\ge 0.83\,n$ for $n$ large enough. The issue is that such a process is not actually a Markov one, as we discuss in this note. We argue this issue, and correct it by obtaining the same result but considering that the stochastic process is not Markov, but satisfies some kind of first-order homogeneity property that allows us to compute its marginal distributions.
\end{abstract}

\section{Introduction}

Let $S\subset [0,1]^2$ be a set of $n$ points, chosen uniformly at random, and let $R\cup B$ be a random coloring of $S$ in which each point of $S$ is included in $R$ uniformly at random with probability $1/2$. A matching of $S$ with rectangles, or simply a {\em matching}, is a set of pairwise-disjoint axis-aligned rectangles where each rectangle covers exactly two points of $S$ of {\em the same color}. Let $M(n)$ be the random variable equal to the number of points of $S$ that are covered by a maximum matching.

Corujo et al.~\cite{corujo2023matching} introduced the study of $M(n)$: They designed a deterministic algorithm to match points of $S$, and the algorithm was modeled as a discrete stochastic process $X_1,X_2,\ldots,X_n$ over a set of $18$ states $\mathcal{E}=\{e_1,e_2,\ldots,e_{18}\}$. The algorithm matches points of $S$ from left to right (in the $x$-coordinate order), and $X_t$ stands for the state of the algorithm after processing the first $t$ points. They defined the function $f:\mathcal{E}\rightarrow\mathbb{R}_{\ge 0}$ with $f(e_i)=2$ for $i\in\{2, 6, 9, 10, 16, 17,18\}$, $f(e_{11})=4$, and $f(e_i)=0$ otherwise; so that the number of points matched by the algorithm equals the random sum $f(X_1)+f(X_2)+\ldots+f(X_n)$. By claiming that the stochastic process is a Markov chain, they computed the stochastic transition matrix $P$, the vector $\overline{s}=(s_1,s_2,\ldots,s_{18})$ which is the stationary distribution of the chain and equal to the unique solution of the system $\overline{s}=\overline{s}\cdot P$, and define $\alpha_3=\sum_{i=1}^{18}f(e_i)\,s_i\approx 0.830030151$. The Ergodic Theorem~\cite[Thm.\ 1.10.2]{norris1998markov} was used to ensure that
$\lim_{n\rightarrow\infty}\frac{1}{n}\sum_{j=1}^n f(X_j) = \sum_{i=1}^{18} s_if(e_i) = \alpha_3$ almost surely, which implies $M(n)\ge 0.83\,n$ almost surely for $n$ large enough.

The issue in the paper of Corujo et al.~\cite{corujo2023matching} is that they claimed without a proper proof that
\begin{equation}\label{eq1}
	\pr(X_{t}=x_{t}\mid X_{t-1}=x_{t-1},\ldots,X_2=x_2,X_1=x_1)=\pr(X_{t}=x_{t}\mid X_{t-1}=x_{t-1}),
\end{equation}
for all $x_1,x_2,\ldots,x_{t}\in\mathcal{E}$ and $t>1$, giving no proof that the stochastic process is a Markov chain. In fact, the probabilities $\pr(X_{t}=e_i\mid X_{t-1}=e_j)$, which are the entries of the transition matrix of the stochastic process $X_1,X_2,\ldots,X_n$ and correspond to the right part of equation~\eqref{eq1}, are conditioned in any path of reaching the state $e_j$ at time $t-1$ (i.e.\ the very recent, or first order past), whereas the left part of equation~\eqref{eq1} is the probability conditioned on precisely one path (i.e.\ the complete history).

Concretely, one can check that the process defined by Corujo et al.~\cite{corujo2023matching} is not a Markov chain by comparing, for instance, the conditional probabilities
 \(
 	\Pr ( X_t = e_4 \mid X_{t-1} = e_4 )
 \)
 and
 \[
 	\Pr ( X_t = e_4 \mid X_{t-1} = e_4, X_{t-2} = e_4, \dots , X_3 = e_4, X_2 = e_3, X_1 = e_1).
 \]
The first probability stands for the event in which the last 3 points, of the first $t$ points seen by the algorithm, are not matched, form a monotone chain and their colors alternate, with the condition that the last 3 points of the first $t-1$ ones satisfy the same property (see Figure~\ref{fig:A}). The second probability stands for the same event but conditioned in that the first $t-1$ points are not matched and form a monotone chain, with color alternation (see Figure~\ref{fig:B}). 

\begin{figure}[t]
	\centering
	\subfloat[]{
		\includegraphics[scale=0.85,page=1]{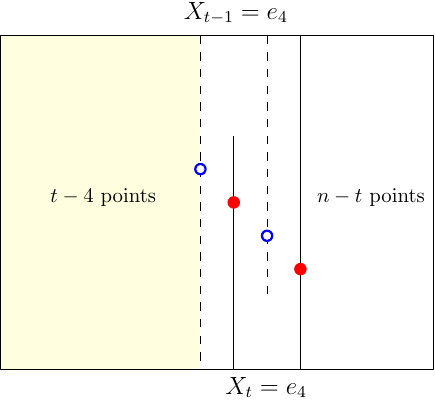}\label{fig:A}
	}~~~~~~
	\subfloat[]{
		\includegraphics[scale=0.85,page=2]{img.pdf}\label{fig:B}
	}
	\caption{\small{(a) The last 3 points, of the first $t$ points processed by the algorithm of Corujo et al.~\cite{corujo2023matching}, are not matched and form a monotone sequence with color alternation. It also happens for the former 3 last points. Note that the first $t-4$ points can be in any position with respect to the last 4 points, and could be matched between them. (b) All the first $t$ points form a monotone sequence with color alternation.}}\label{fig:Z}
\end{figure}

On the one hand, as proved by Corujo et al.~\cite{corujo2023matching}, we have that
\[
 \Pr ( X_t = e_4 \mid X_{t-1} = e_4 ) ~=~ \frac{1}{8}.
\]
Notice that, although the process $(X_t)$ is not Markov, the one step transition probabilities $\Pr ( X_t = y \mid X_{t-1} = x )$ provided in the above-mentioned paper are correct. On the other hand, for every $t \ge 3$ we have that
\[
	\Pr ( X_t = e_4, X_{t-1} = e_4, \dots , X_3 = e_4, X_2 = e_3, X_1 = e_1) ~=~ \left(\frac{2}{t!}\right)\cdot\left(\frac{1}{2^{t-1}}\right) ~=~\frac{1}{t! \, 2^{t - 2}},
\]
where $2/t!$ is the probability of the relative positions of the $t$ points (either an increasing chain or a decreasing chain), and $1/2^{t-1}$ is the probability of the color alternation.
Hence,
\[
\Pr ( X_t = e_4 \mid X_{t-1} = e_4, X_{k-2} = e_4, \dots , X_3 = e_4, X_2 = e_3, X_1 = e_1) ~=~ \frac{1}{2 t}.
\]
Consequently, for every $t \ge 5$ we obtain
 \[
\Pr ( X_t = e_4 \mid X_{t-1} = e_4, \dots , X_3 = e_4, X_2 = e_3, X_1 = e_1) ~<~ \Pr ( X_t = e_4 \mid X_{t-1} = e_4 ).
\]

In this note, we correct this issue by obtaining the same result without assuming the process is a Markov chain. For a related work, refer to the paper of Farag{\'o}~\cite{farago2022approximating} where general discrete stochastic processes satisfying a first-order homogeneity property are approximated by Markov chains.
Our main result is the following:
\begin{theorem}\label{thm:main}
	Let $n > 0$ be an integer, and let $S \subset [0, 1]^2$ be a set of $n$ points, chosen uniformly at random.
	Let $R \cup B$ be a random partition (i.e., coloring) of $S$ in which each point of $S$ is included in $R$ uniformly at random with probability $1/2$. 
	Hence, there exist some $\beta > 0$ such that
	\begin{equation}\label{eq:main_result}
	\lim_{n \to \infty} \frac{M(n)}{n} ~=~ \lim_{n \to \infty} \frac{\E[M(n)]}{n} ~=~ \sup_{n \in \mathbb{N}} \frac{ \E[M(n)] }{n} ~\ge~ 0.83+\beta, \text{ a.s.}
	\end{equation}
\end{theorem}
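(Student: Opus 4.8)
The plan is to establish the three parts of \eqref{eq:main_result} --- convergence of $\E[M(n)]/n$ to its supremum, upgrade to an almost sure limit, and the bound $\ge 0.83+\beta$ --- essentially independently. Everything rests on one structural observation: $M(\cdot)$ is invariant under applying a strictly increasing reparametrization to either coordinate axis, since such a map sends axis-aligned rectangles to axis-aligned rectangles and preserves incidences, disjointness and colors. Hence $M$ of a colored point set is a function only of the relative order of the $x$-coordinates, the relative order of the $y$-coordinates, and the colors; in particular, for $n$ independent uniform colored points, $M(n)$ has the law of $M(\sigma,c)$ with $\sigma$ a uniform permutation, $c$ a uniform color vector, and $\sigma$ independent of $c$. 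To prove $\E[M(m+n)]\ge\E[M(m)]+\E[M(n)]$, take $m+n$ independent uniform colored points, sort them by $x$-coordinate, and let $A$ be the $m$ leftmost and $B$ the $n$ rightmost. Replacing each rectangle of a matching by the bounding box of the two points it covers (same incidences, disjointness preserved) shows that any matching of $A$ lies in the vertical slab spanned by $A$, any matching of $B$ in the slab spanned by $B$, and these slabs are a.s.\ disjoint; therefore $M(A\cup B)\ge M(A)+M(B)$. By the invariance above together with exchangeability of i.i.d.\ coordinates, $A$ carries the $x$-/$y$-order patterns and colors of $m$ uniform points and $B$ those of $n$, so $\E[M(A)]=\E[M(m)]$ and $\E[M(B)]=\E[M(n)]$; taking expectations gives superadditivity, and Fekete's lemma yields $\lim_n \E[M(n)]/n=\sup_n \E[M(n)]/n =: L\in[0,1]$.

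To upgrade to an almost sure statement, I would view $M(n)$ as a function of the $n$ i.i.d.\ colored points and observe that it has bounded differences: changing one colored point alters $M(n)$ by at most a fixed constant, because the rectangles of a matching are pairwise disjoint, so each point is covered by at most one of them, and re-inserting the point at its new location can invalidate at most one further rectangle. McDiarmid's inequality then gives $\Pr(|M(n)-\E[M(n)]|\ge\varepsilon n)\le 2e^{-c\varepsilon^2 n}$ for some $c>0$; since these bounds are summable in $n$, Borel--Cantelli together with $\E[M(n)]/n\to L$ yields $M(n)/n\to L$ almost surely, which, with the previous paragraph, gives the first two equalities in \eqref{eq:main_result}.

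For the lower bound, the algorithm of Corujo et al.\ always produces a valid matching, so $\sum_{t=1}^n f(X_t)\le M(n)$ and hence $\E[M(n)]/n\ge \tfrac1n\sum_{t=1}^n\sum_i f(e_i)\Pr(X_t=e_i)$. The crux of the corrigendum is that, although $(X_t)$ is not a Markov chain, the one-step transition probabilities $\Pr(X_t=e_i\mid X_{t-1}=e_j)$ computed in \cite{corujo2023matching} are correct and do not depend on $t$ (first-order homogeneity). By the law of total probability the marginal row vectors $\mu_t:=(\Pr(X_t=e_i))_i$ then satisfy $\mu_t=\mu_{t-1}P$ for all $t$ beyond an initial constant, so $\tfrac1n\sum_{t=1}^n\mu_t=\mu_1\cdot\tfrac1n\sum_{k=0}^{n-1}P^k\to\overline{s}$ by the Ces\`aro ergodic theorem for the finite chain with transition matrix $P$ and unique stationary distribution $\overline{s}$ (a statement about the matrix $P$ alone, requiring no Markov property of $(X_t)$). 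Consequently $\tfrac1n\sum_{t=1}^n\sum_i f(e_i)\Pr(X_t=e_i)\to\sum_i f(e_i)s_i=\alpha_3\approx 0.830030151$, so $L=\lim_n\E[M(n)]/n\ge\alpha_3>0.83$; choosing $\beta:=\alpha_3-0.83>0$ finishes the proof.

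I expect the main obstacle to be the third part: one has to prove rigorously that the one-step conditional laws $\Pr(X_t=\cdot\mid X_{t-1}=\cdot)$ are genuinely state-only and $t$-independent (this is precisely the first-order homogeneity property replacing the flawed Markov assumption), so that the one-dimensional marginals $\mu_t$ still evolve under $P$ and their Ces\`aro averages still converge to $\overline{s}$; only then does the number $\alpha_3$, and in particular the strict inequality $\alpha_3>0.83$, survive the correction. A secondary point needing care is justifying the identities $\E[M(A)]=\E[M(m)]$ and $\E[M(B)]=\E[M(n)]$ in the superadditivity step, i.e.\ checking that the $x$- and $y$-order patterns carried by the leftmost $m$ of $m+n$ i.i.d.\ uniform points are uniform and independent of the colors.
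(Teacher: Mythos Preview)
Your proposal is correct and follows essentially the same route as the paper: superadditivity of $\E[M(n)]$ plus Fekete for the limit/sup identity, McDiarmid's bounded-differences inequality plus Borel--Cantelli for the almost sure upgrade, and the observation that first-order homogeneity makes the one-dimensional marginals evolve as $\mu_t=\mu_{t-1}P$ so their Ces\`aro averages converge to $\overline{s}$ and yield the $\alpha_3$ lower bound. Your treatment is in places more explicit than the paper's---notably your invariance-under-monotone-reparametrization argument to justify that the leftmost $m$ of $m+n$ i.i.d.\ points have the law needed for $\E[M(A)]=\E[M(m)]$, and your remark that the Ces\`aro convergence is a statement about the matrix $P$ alone---but the architecture and all key lemmas coincide.
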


\begin{remark}[Connection to \cite{corujo2023matching}]
	The lower bound for \( M(n)/n \) matches the one given in \cite[Thm.\ 2]{corujo2023matching}, and in this regard, we correct its proof and achieve the same result.
	However, the almost sure convergence of \( M(n)/n \) to a deterministic constant is a novel finding.
\end{remark}

The rest of this note is organized as follows.
In Section~\ref{sec:expectation-base}, we consider a generic stochastic process $X_1,X_2,X_3,\ldots$ and provide bounds for the expectation $\E[f(X_1)+f(X_2)+\ldots+f(X_n)]$, where $f$ is a bounded non-negative function. In Section~\ref{sec:expectation}, we use the result of Section~\ref{sec:expectation-base} to prove that $\E[M(n)]\ge (0.83+\beta)n$ for $n$ large enough, where $\beta>0$ is a  constant. In Section~\ref{sec:concentration}, by modeling the colored $S\subset[0,1]^2$ as a random $2n$-dimensional vector in $[0,1]^n\times\{0,1\}^n$ and using the McDiarmid's Inequality~\cite{doob1940regularity}, we first prove that $\pr(M(n)\ge 0.83\,n)\rightarrow 1$ when $n\rightarrow\infty$. After that, we strengthen this statement by showing that $M(n)/n$ converges almost surely to the constant $\sup_n\E[M(n)]/n>0.83$, when $n \to \infty$, concluding the proof of Theorem \ref{thm:main}.

\section{Expectation of the accumulative sum of a discrete stochastic process}\label{sec:expectation-base}

Let $\mathcal{E}=\{e_1,e_2,\ldots,e_N\}$ be a finite set of $N$ states, and let $f:\mathcal{E}\rightarrow\mathbb{R}_{\ge 0}$ be a bounded non-negative function. Let $\mathcal{X}=X_1,X_2,X_3,\ldots$ be a stochastic process in which $X_t\in\mathcal{E}$ for all $t\ge 1$. The variable $X_1$ takes value with probability distribution vector $p_1$, and the $N\times N$ stochastic matrix $P$ represents the transition probabilities of $\mathcal{X}$. That is, the columns of $P$ are probability vectors, and it holds that $P_{i,j}=\pr(X_{t}=e_i\mid X_{t-1}=e_j)$ for all $t>1$ and $i,j\in\{1,2,\ldots,N\}$.
Note that the entries of the transition probability matrix $P$ associated to $\mathcal{X}$ do not depend on $t$.
This property was called first-order homogeneity by Faragó \cite{farago2022approximating} and it is worth noting that it holds whenever the process is Markov but it does not implies the Markov property.

We assume that $P$ is irreducible and aperiodic in the usual sense it is defined by for the transition probabilitiy matrix of Markov chains (cf.\ \cite[Def.\ 3]{farago2022approximating}).
Let the vector $\overline{s}=(s_1,s_2,\ldots,s_n)$ be the unique solution, also a probability vector, of the linear system $\overline{s}=\overline{s}\cdot P$, which exists given that $P$ is a stochastic matrix. Let $\alpha=\sum_{i=1}^Nf(e_i)\,s_i= \overline{f}\cdot \overline{s}$, where $\overline{f}$ is the vector $(f(e_1),f(e_2),\ldots,f(e_N))$.

\begin{lemma}\label{lem:expectation}
For any constant $\varepsilon>0$, it holds that
\[
	n(\alpha-\varepsilon)-O_{\varepsilon}(1) ~\le~ \E[f(X_1)+\ldots+f(X_n)] ~\le~ n(\alpha+\varepsilon)+O_{\varepsilon}(1).
\]
for $n$ large enough, where both terms $O_{\varepsilon}(1)$ stand for constants depending on $\varepsilon$.
\end{lemma}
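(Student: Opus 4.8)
The plan is to reduce the statement to the convergence behaviour of powers of the transition matrix $P$. Write $\E[f(X_1)+\dots+f(X_n)] = \sum_{t=1}^{n} \overline{f}\cdot p_t$, where $p_t$ is the distribution vector of $X_t$. The key observation is that first-order homogeneity alone forces $p_{t} = P\, p_{t-1}$, hence $p_t = P^{t-1} p_1$, \emph{exactly as in the Markov case} — this is the one place where we use the hypothesis, and it does not need the full Markov property, only that the one-step transition matrix is time-independent. So $\E[f(X_1)+\dots+f(X_n)] = \overline{f}\cdot \bigl(\sum_{t=0}^{n-1} P^{t}\bigr) p_1$.

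Next I would invoke the Perron--Frobenius / convergence theorem for irreducible aperiodic stochastic matrices: $P^{t} p_1 \to \overline{s}$ as $t\to\infty$, and in fact the convergence is geometric, i.e. there are constants $C>0$ and $\rho\in(0,1)$ with $\|P^{t} p_1 - \overline{s}\|_1 \le C\rho^{t}$. Consequently $|\overline{f}\cdot P^{t} p_1 - \alpha| \le \|\overline{f}\|_\infty C \rho^{t}$. Summing the geometric tail, for any $\varepsilon>0$ there is a threshold $t_0 = t_0(\varepsilon)$ with $|\overline{f}\cdot P^{t} p_1 - \alpha|\le \varepsilon$ for all $t\ge t_0$; the first $t_0$ terms contribute at most a constant $O_\varepsilon(1)$ (each term is bounded by $\|\overline f\|_\infty$, and there are $t_0(\varepsilon)$ of them, plus the accumulated error $\alpha t_0$). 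Therefore
\[
(n - t_0)(\alpha - \varepsilon) ~\le~ \E[f(X_1)+\dots+f(X_n)] ~\le~ (n-t_0)(\alpha+\varepsilon) + t_0\|\overline f\|_\infty,
\]
and absorbing the $t_0$-dependent pieces into $O_\varepsilon(1)$ yields exactly the claimed two-sided bound $n(\alpha-\varepsilon) - O_\varepsilon(1) \le \E[f(X_1)+\dots+f(X_n)] \le n(\alpha+\varepsilon) + O_\varepsilon(1)$ for $n$ large enough.

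The only genuinely delicate point is the very first step: one must be careful that even for a \emph{non-Markov} process the relation $p_t = P p_{t-1}$ still holds. This is true precisely because $(p_t)_i = \pr(X_t = e_i) = \sum_j \pr(X_t = e_i \mid X_{t-1}=e_j)\pr(X_{t-1}=e_j) = \sum_j P_{i,j}(p_{t-1})_j$ by the law of total probability, and the conditional probabilities here are exactly the entries of $P$ by the first-order homogeneity assumption — no conditioning on the deeper past is ever invoked. So the marginal (one-dimensional) distributions of $\mathcal{X}$ evolve exactly like those of the Markov chain with transition matrix $P$, which is all that the expectation of a sum of functions of single coordinates can see. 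Everything else is the standard quantitative ergodic estimate for $P$, which is available because $P$ is assumed irreducible and aperiodic. I expect the write-up to be short; the main obstacle is simply making explicit that the Markov-like recursion for marginals survives, since that is the subtlety the corrigendum exists to address.
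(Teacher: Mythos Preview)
Your proposal is correct and follows essentially the same route as the paper: write $\E[\sum f(X_t)] = \sum_t \overline{f}\cdot p_t$, use the law of total probability plus first-order homogeneity to get the recursion $p_t = P\,p_{t-1}$ (the paper does this with the transpose $Q=P^T$ and row vectors, a purely notational difference), then invoke convergence of powers of an irreducible aperiodic stochastic matrix, split the sum at a threshold $t_0(\varepsilon)$, and absorb the finitely many initial terms into $O_\varepsilon(1)$. Your explicit emphasis on why $p_t = P\,p_{t-1}$ survives without the Markov property is exactly the point the paper is making, and your use of geometric convergence is slightly more than needed (the paper only uses that $P^t$ converges), but this changes nothing in the argument.
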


\begin{proof}
Let $p_t$ be the vector probability of $X_t$. That is, $p_t=(\pr(X_t=e_i))_{i=1}^N$. For $t>1$, we have that $\pr(X_t=e_i)=\sum_{j=1}^N\pr(X_{t-1}=e_j)\cdot \pr(X_t=e_i\mid X_{t-1}=e_j)$, which is equivalent to stating that $p_t$ is the scalar product between the vector $p_{t-1}$ and the $i$-th column of the transpose matrix $Q=P^T$. Hence, this is standard to state that $p_t=p_{t-1}\times Q$, so $p_t=p_1\times Q^{t-1}$ for all $t\ge 1$.

We have that 
\[
	\E[f(X_t)]=\sum_{i=1}^Nf(e_i)\cdot \pr(X_t=e_i)=\overline{f}\cdot p_t=\overline{f}\cdot (p_1\times Q^{t-1}),
\]
and then
\[
	\E[f(X_1)+\ldots+f(X_n)] ~=~ \sum_{t=1}^{n}\E[f(X_t)] ~=~ \overline{f}\cdot (p_1 \times (Q^0+Q^1+Q^2+\ldots+Q^{n-1})).
\]
Given that $P$ is irreducible and aperiodic, it is well-known that the limit $\tilde{Q}=\lim_{n\rightarrow\infty} Q^n$ exists and it is the matrix in which each row is the vector $\overline{s}=(s_1,s_2,\ldots,s_n)$. Let $\delta=\delta(\varepsilon,f)>0$ be a constant that will be specified later, and let $n_0$ be a constant such that $\|Q^t-\tilde{Q}\|_{\infty}<\delta$ for $t\ge n_0$. This implies that $|Q^t_{i,j}-s_j|<\delta$, for all $i,j\in\{1,2,\ldots,N\}$ and $t\ge n_0$. Let then $\tilde{Q}_{-\delta}$ denote the matrix obtained by subtracting $\delta$ to each component of $\tilde{Q}$, and $\tilde{Q}_{+\delta}$ the matrix obtained by adding $\delta$ to each component of $\tilde{Q}$. Let $m=\max_{i=1}^N f(e_i)$ which is constant and finite given that $f$ is bounded.

For $n$ large enough, that is, for $n>n_0$, we obtain the following inequalities, given that the function $f$ is non-negative, the vector $p_1$ is a probability vector, and the matrices $Q^0,Q^1,\ldots,Q^{n-1}$ are all non-negative:
\begin{align*}
	\E[f(X_1)+\ldots+f(X_n)] 
		& ~\ge~ \overline{f}\cdot (p_1 \times (Q^{n_0}+Q^{n_0+1}+\ldots+Q^{n-1})) \\
	 	& ~\ge~ \overline{f}\cdot (p_1 \times ((n-n_0)\,\tilde{Q}_{-\delta})) \\
		& ~=~ (n-n_0)\,(\overline{f}\cdot (p_1 \times \tilde{Q}_{-\delta})) \\
		& ~=~ (n-n_0)\,(\overline{f}\cdot (s_1-\delta,s_2-\delta,\ldots,s_N-\delta)) \\
		& ~=~ (n-n_0)\,(\alpha-\delta(f(e_1)+f(e_2)+\ldots+f(e_N))) \\
		& ~\ge~ (n-n_0)\,(\alpha-\delta\,Nm).
\end{align*}
Let $U$ be the $N\times N$ matrix with all components equal to 1, and $u$ a row vector of $U$. Similarly, we have
\begin{align*}
	\E[f(X_1)+\ldots+f(X_n)] 
		& ~\le~ \overline{f}\cdot (p_1 \times (n_0\, U + Q^{n_0}+Q^{n_0+1}+\ldots+Q^{n-1})) \\
	 	& ~\le~ \overline{f}\cdot (p_1 \times (n_0\,U+(n-n_0)\,\tilde{Q}_{+\delta})) \\
		& ~=~ \overline{f}\cdot (n_0\, p_1 \times U+(n-n_0)\,p_1\times \tilde{Q}_{+\delta}) \\
		& ~=~ \overline{f}\cdot (n_0\, u +(n-n_0)\,(s_1+\delta,s_2+\delta,\ldots,s_N+\delta)) \\
		& ~=~ n_0\,\overline{f}\cdot u + (n-n_0)\,\overline{f}\cdot (s_1+\delta,s_2+\delta,\ldots,s_N+\delta) \\
		& ~\le~ n_0\,Nm + (n-n_0)(\alpha+\delta\,N\,m).
\end{align*}
By taking $\delta=\varepsilon/(Nm)$, we have that
\[
	n(\alpha-\varepsilon)-O_{\varepsilon}(1) ~\le~ \E[f(X_1)+\ldots+f(X_n)] ~\le~ n(\alpha+\varepsilon)+O_{\varepsilon}(1).
\]
The lemma thus follows.
\end{proof}  

\section{Bounding the expectation of $M(n)$}\label{sec:expectation}

Consider the same deterministic algorithm designed by Corujo et al.~\cite{corujo2023matching} to match points of $S$, modeled as a discrete stochastic process $X_1,X_2,\ldots,X_n$ over the set $\mathcal{E}=\{e_1,e_2,\ldots,e_{18}\}$ of $N=18$ states. Consider the same function $f:\mathcal{E}\rightarrow\mathbb{R}_{\ge 0}$ with $f(e_i)=2$ for $i\in\{2, 6, 9, 10, 16,$ $17,18\}$, $f(e_{11})=4$, and $f(e_i)=0$ otherwise; so that the number of points matched by the algorithm is the random sum $f(X_1)+f(X_2)+\ldots+f(X_n)$. From the stochastic transition matrix of $X_1,X_2,\ldots,X_n$, which is irreducible and aperiodic, compute the vector $\overline{s}=(s_1,s_2,\ldots,s_{18})$, and define $\alpha = \sum_{i=1}^{18}f(e_i)\,s_i\approx 0.830030151$ %. Let $\alpha=\alpha_3$ 
and let $\beta=(\alpha - 0.83)/3$. Using Lemma~\ref{lem:expectation} for $\varepsilon=\beta$, we have for $n$ large enough that
\[
	\frac{1}{n}\,\E\left[f(X_1)+\ldots+f(X_n)\right] 
	 ~\ge~  (\alpha-\beta)-\frac{O_{\beta}(1)}{n} ~=~  (0.83+2\beta)-\frac{O_{\beta}(1)}{n} ~\ge~ 0.83+\beta,
\]
which implies that 
\begin{lemma}[Lower bound for the expectation] \label{lemma:expec_bound}
	For $n$ large enough, $\E[M(n)]\ge (0.83+\beta)\,n$.
\end{lemma}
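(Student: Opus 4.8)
The plan is to reduce the statement to Lemma~\ref{lem:expectation} applied to the state process $X_1,\ldots,X_n$ of the deterministic algorithm of Corujo et al.~\cite{corujo2023matching}. First I would recall that this algorithm always outputs a valid matching of $S$ with pairwise-disjoint axis-aligned rectangles, so the number of points it covers is at most the size $M(n)$ of a maximum matching; since, by the choice of $f$, that number is exactly the random sum $f(X_1)+\cdots+f(X_n)$, we obtain $\E[M(n)]\ge\E[f(X_1)+\cdots+f(X_n)]$.

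Next I would verify that the hypotheses of Lemma~\ref{lem:expectation} are met. The crucial point---and the reason this corrigendum is needed---is that Lemma~\ref{lem:expectation} never invokes the Markov property: it requires only first-order homogeneity of the process together with irreducibility and aperiodicity of the one-step transition matrix $P$. As noted in the introduction, the one-step transition probabilities $\pr(X_t=e_i\mid X_{t-1}=e_j)$ computed in \cite{corujo2023matching} are correct and do not depend on $t$, which is precisely first-order homogeneity; and $P$ is the same irreducible, aperiodic stochastic matrix given there. Consequently the stationary vector $\overline{s}$ and the constant $\alpha=\overline{f}\cdot\overline{s}\approx 0.830030151$ are as in that paper.

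Then I would apply Lemma~\ref{lem:expectation} with $\varepsilon=\beta:=(\alpha-0.83)/3>0$ (positive because $\alpha>0.83$), which yields, for $n$ large enough,
\[
	\E[M(n)]~\ge~\E[f(X_1)+\cdots+f(X_n)]~\ge~n(\alpha-\beta)-O_\beta(1)~=~n(0.83+2\beta)-O_\beta(1).
\]
For all $n$ past a threshold depending only on $\beta$, the term $O_\beta(1)$ is at most $\beta n$, so the right-hand side is at least $(0.83+\beta)n$, establishing the lemma.

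The main obstacle is not the displayed inequality chain, which is routine once Lemma~\ref{lem:expectation} is in hand, but the justification that the lemma is applicable: namely that the Corujo et al.\ process, although provably non-Markov, still satisfies first-order homogeneity with the transition matrix stated in \cite{corujo2023matching}, and that the numerical value $\alpha\approx 0.830030151$ (hence $\beta>0$) is correct. This is exactly the gap the present note closes, and I would handle it by appealing to the verified correctness of the one-step transition probabilities and of the stationary-vector computation in \cite{corujo2023matching}.
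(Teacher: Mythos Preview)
Your proposal is correct and follows essentially the same route as the paper: you reduce to Lemma~\ref{lem:expectation} via the inequality $M(n)\ge f(X_1)+\cdots+f(X_n)$, check that first-order homogeneity together with the irreducible, aperiodic transition matrix from~\cite{corujo2023matching} makes that lemma applicable, apply it with $\varepsilon=\beta=(\alpha-0.83)/3$, and absorb the $O_\beta(1)$ term into $\beta n$ for large $n$. If anything, you are slightly more explicit than the paper in spelling out why the non-Markov process still meets the hypotheses of Lemma~\ref{lem:expectation}.
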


\section{Concentration of $M(n)$}\label{sec:concentration}

A function $F:\mathcal{X}_1\times \mathcal{X}_2\times \ldots\times\mathcal{X}_n\rightarrow\mathbb{R}$ satisfies the {\em bounded differences property} if there exist constants $d_1,d_2,\ldots,d_n$ such that 
\[
	|F(x_1,\ldots,x_{i-1},x_{i},x_{i+1},\ldots,x_n)-F(x_1,\ldots,x_{i-1},x'_{i},x_{i+1},\ldots,x_n)| ~\le~ d_i,
\]
for all $i\in\{1,2,\ldots,n\}$, $(x_1,x_2,\ldots,x_n)\in \mathcal{X}_1\times \mathcal{X}_2\times \ldots\times\mathcal{X}_n$, and $x'_{i}\in\mathcal{X}_i$.

\begin{lemma}[McDiarmid's Inequality~\cite{doob1940regularity}]\label{lem:mcdiarmid}
Let $F:\mathcal{X}_1\times \mathcal{X}_2\times \ldots\times\mathcal{X}_n\rightarrow\mathbb{R}$ satisfy the bounded differences property with constants $d_1,d_2,\ldots,d_n$. Consider independent random variables $X_1,X_2,\ldots,X_n$ where $X_i\in\mathcal{X}_i$ for all $i\in\{1,2,\ldots,n\}$. Then, for any $\varepsilon>0$ we have that
\[
	\text{(i)}~~ \Pr \Big( F(X_1,X_2,\ldots,X_n)-\E[F(X_1,X_2,\ldots,X_n)]\ge \varepsilon \Big) ~\le~ \exp\left(-\frac{2\varepsilon^2}{\sum_{i=1}^nd^2_i}\right),
\]
and
\[
\text{(ii)}~~ \Pr \Big( F(X_1,X_2,\ldots,X_n)-\E[F(X_1,X_2,\ldots,X_n)]\le -\varepsilon \Big) ~\le~ \exp\left(-\frac{2\varepsilon^2}{\sum_{i=1}^nd^2_i}\right).
\]
As an immediate consequence,
\[
	\text{(iii)}~~ \Pr \Big( | F(X_1,X_2,\ldots,X_n)-\E[F(X_1,X_2,\ldots,X_n)] | \ge \varepsilon \Big) ~\le~ 2 \exp\left(-\frac{2\varepsilon^2}{\sum_{i=1}^nd^2_i}\right).
\]
\end{lemma}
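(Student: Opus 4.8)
The plan is to derive this form of McDiarmid's inequality from the Azuma--Hoeffding martingale bound applied to the Doob martingale of $F$. First I would set $F = F(X_1,\ldots,X_n)$, fix the filtration $\mathcal{F}_k = \sigma(X_1,\ldots,X_k)$ for $0 \le k \le n$ (with $\mathcal{F}_0$ trivial), and define the Doob martingale $Y_k = \E[F \mid \mathcal{F}_k]$. Since $F$ depends only on $X_1,\ldots,X_n$ and the variables are integrable, we have $Y_0 = \E[F]$ and $Y_n = F$, so that $F - \E[F] = \sum_{k=1}^n D_k$, where the martingale increments $D_k = Y_k - Y_{k-1}$ satisfy $\E[D_k \mid \mathcal{F}_{k-1}] = 0$.

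The central step is to show that, conditionally on $\mathcal{F}_{k-1}$, each increment $D_k$ is supported in an interval of width at most $d_k$. Here the independence of the $X_i$ is essential: because $X_k,\ldots,X_n$ are independent of $\mathcal{F}_{k-1}$, the quantity $Y_k$ is a measurable function of $X_1,\ldots,X_k$, and $Y_{k-1}$ is the average of $\E[F \mid \mathcal{F}_{k-1}, X_k = x]$ over the law of $X_k$. Setting $U_k = \sup_x \E[F \mid \mathcal{F}_{k-1}, X_k = x]$ and $L_k = \inf_x \E[F \mid \mathcal{F}_{k-1}, X_k = x]$, the bounded differences property applied inside the conditional expectation yields $U_k - L_k \le d_k$, and hence $D_k \in [L_k - Y_{k-1},\, U_k - Y_{k-1}]$, an interval of length at most $d_k$. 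I expect this to be the main obstacle, as it requires using independence to interchange conditioning and integration and to promote the pointwise bounded-differences hypothesis to the level of conditional expectations.

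With the conditional boundedness established, I would apply Hoeffding's lemma conditionally: since $\E[D_k \mid \mathcal{F}_{k-1}] = 0$ and $D_k$ lies in an interval of length at most $d_k$, one obtains $\E[e^{\lambda D_k} \mid \mathcal{F}_{k-1}] \le \exp(\lambda^2 d_k^2/8)$ for every $\lambda > 0$. Peeling off the conditional expectations one index at a time via the tower property then gives $\E[e^{\lambda (F - \E[F])}] \le \exp\bigl(\tfrac{\lambda^2}{8}\sum_{k=1}^n d_k^2\bigr)$. Markov's inequality applied to $e^{\lambda(F - \E[F])}$ yields $\Pr(F - \E[F] \ge \varepsilon) \le \exp\bigl(-\lambda\varepsilon + \tfrac{\lambda^2}{8}\sum_{k} d_k^2\bigr)$, and optimizing with the choice $\lambda = 4\varepsilon / \sum_k d_k^2$ produces the bound $\exp\bigl(-2\varepsilon^2/\sum_k d_k^2\bigr)$, which is exactly part (i).

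Finally, part (ii) follows immediately by applying part (i) to the function $-F$, which satisfies the bounded differences property with the same constants $d_1,\ldots,d_n$; and part (iii) is then obtained from (i) and (ii) by the union bound. The only nontrivial quantitative input is Hoeffding's lemma, whose factor $1/8$ is precisely what yields the sharp constant $2$ in the exponent.
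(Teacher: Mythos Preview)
Your proposal is correct and follows the standard route to McDiarmid's inequality via the Doob martingale and the Azuma--Hoeffding bound. Note, however, that the paper does not supply its own proof of this lemma: it is stated with a citation and used as a black box. In that sense there is nothing to compare against; your argument simply fills in a proof that the authors chose to omit, and it does so in the conventional way.
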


The random colored $n$-point set $S\subset [0,1]^2$ can be modeled as a random $2n$-dimensional vector in $[0,1]^n\times \{0,1\}^n$: $S$ is the vector $(Y_1,Y_2,\ldots,Y_n,C_1,C_2,\ldots,C_n)$, where $Y_1,Y_2,\ldots,Y_n$ are i.i.d.\ Uniform([0,1]) random variables, and $C_1,C_2,\ldots,C_n$ are i.i.d.\ Bernoulli($0.5$) random variables. The random point set is then $\{p_i:=(i/n,Y_i)\mid i=1,2,\ldots,n\}$ and $p_i$ is colored red if and only if $C_i=0$.

Let $F(Y_1,\ldots,Y_n,C_1,\ldots,C_n)$ be defined as the number of points of $S=(Y_1,\ldots,Y_n,C_1,\ldots,C_n)$ that are covered by the rectangles of a maximum matching of $S$, divided by $n$. That is, $M(n)/n$.

\begin{proposition}\label{prop:1}
The function $F:[0,1]^n\times \{0,1\}^n\rightarrow[0,1]$ satisfies the bounded differences property with constants $d_1=d_2=\ldots=d_n=4/n$ and $d_{n+1}=d_{n+2}=\ldots=d_{2n}=2/n$.
\end{proposition}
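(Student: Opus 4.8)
The plan is to show that changing a single coordinate of the $2n$-dimensional vector $(Y_1,\dots,Y_n,C_1,\dots,C_n)$ changes $M(n)$ by at most a bounded amount, and then divide by $n$. The quantity $M(n)$ is $2$ times the number of rectangles in a maximum matching, so it suffices to bound the change in the number of matched \emph{pairs} under a single-coordinate perturbation, and to do so we use the optimality of a maximum matching in both directions.

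First I would handle a change in one color variable $C_i$. Let $S$ and $S'$ be the two colored point sets differing only in the color of $p_i$, and let $\mathcal{M}$ be a maximum matching of $S$ with $|\mathcal{M}|=M(n)/2$ pairs. At most one rectangle of $\mathcal{M}$ covers $p_i$; deleting that rectangle (if it exists) yields a valid matching of $S'$, since every remaining rectangle still covers two points of the same color (none of them is $p_i$, whose color changed) and the rectangles are still pairwise disjoint. Hence the maximum matching of $S'$ has at least $M(n)/2-1$ pairs, i.e.\ $M(n)(S')\ge M(n)(S)-2$. The symmetric argument (starting from a maximum matching of $S'$) gives $M(n)(S)\ge M(n)(S')-2$, so $|M(n)(S)-M(n)(S')|\le 2$, and dividing by $n$ gives $d_{n+i}=2/n$.

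Next I would handle a change in one position variable $Y_i$, i.e.\ moving the point $p_i=(i/n,Y_i)$ to $p_i'=(i/n,Y_i')$ while keeping its color and the $x$-coordinate $i/n$ fixed. Again let $\mathcal{M}$ be a maximum matching of $S$. Here the subtlety is that a rectangle not covering $p_i$ in $S$ might nonetheless contain the new location $p_i'$, so simply deleting the (at most one) rectangle through $p_i$ need not produce a valid matching of $S'$. The fix is to delete \emph{two} rectangles: the rectangle $\rho_1$ of $\mathcal{M}$ covering $p_i$ (if any), and the rectangle $\rho_2$ of $\mathcal{M}$ whose interior contains $p_i'$ (if any — there is at most one, since the rectangles of $\mathcal{M}$ are pairwise disjoint so their interiors are disjoint, and $p_i'$ lies in at most one of them). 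After removing $\rho_1$ and $\rho_2$, the remaining rectangles are pairwise disjoint, avoid $p_i$, and avoid $p_i'$, hence form a valid matching of $S'$ with at least $M(n)/2-2$ pairs. Therefore $M(n)(S')\ge M(n)(S)-4$, and by symmetry $|M(n)(S)-M(n)(S')|\le 4$, giving $d_i=4/n$.

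The main obstacle is precisely the position-perturbation case: one must be careful that a rectangle of the old matching, while disjoint from the others and not covering $p_i$, can still ``block'' the new point, so the naive bound of deleting one rectangle fails and one genuinely needs the factor $4$ rather than $2$. Once the two perturbation cases are settled, dividing throughout by $n$ (recall $F=M(n)/n$) yields the stated constants $d_1=\dots=d_n=4/n$ and $d_{n+1}=\dots=d_{2n}=2/n$, completing the proof of the proposition.
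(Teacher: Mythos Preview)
Your proposal is correct and follows essentially the same argument as the paper: for a position change you delete the at most two rectangles that are affected (the one matching $p_i$ and the one containing the new location $p_i'$), and for a color change you delete at most one, then invoke symmetry. One small wording point: you should say the rectangle of $\mathcal{M}$ that \emph{contains} $p_i'$ rather than whose \emph{interior} contains $p_i'$, since the rectangles are pairwise disjoint as closed sets and uniqueness already follows from that; otherwise a boundary case could slip through.
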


\begin{proof}
For a colored point set $S$, let $\mathcal{R}(S)$ denote the rectangle set of a maximum matching of $S$. Let $S=(y_1,\ldots,y_n,c_1,\ldots,c_n)$ be a precise $n$-point set, $i\in\{1,2,\ldots,n\}$ and $S'$ the point set that is obtained by moving the $i$-th point $p_i=(i/n,y_i)$ vertically, such that its new $y$-coordinate is $y'_i$. Note that $|\mathcal{R}(S')|\ge |\mathcal{R}(S)|-2$. Indeed, if we remove from $\mathcal{R}(S)$ both the rectangle that matches $p_i$ with other point (if it exists), and the unique rectangle that contains the new point $(i/n,y'_i)$ (if it exists), then we will obtain a feasible matching of $S'$ with at least $|\mathcal{R}(S)|-2$ rectangles (i.e.\ at most 4 less points are matched). Doing this analysis in the contrary direction, going from $S'$ to $S$, we have that $|\mathcal{R}(S)|\ge |\mathcal{R}(S')|-2$. We then have
\[
|F(y_1,\ldots,y_{i-1},y_{i},y_{i+1},\ldots,y_n,c_1,\ldots,c_n)-F(y_1,\ldots,y_{i-1},y'_{i},y_{i+1},\ldots,y_n,c_1,\ldots,c_n)| ~\le~ 4/n,
\]
for all $i\in\{1,2,\ldots,n\}$. Similarly, if what we change is the color of $c_i$ of $p_i$ to $c'_i$, obtaining $S'$ from $S$, we will have that $|\mathcal{R}(S')|\ge |\mathcal{R}(S)|-1$ since at most one rectangle is lost. Then,
\[
|F(y_1,\ldots,y_n,c_1,\ldots,c_{i-1},c_{i},c_{i+1},\ldots,c_n)-F(y_1,\ldots,y_n,c_1,\ldots,c_{i-1},c'_{i},c_{i+1},\ldots,c_n)| ~\le~ 2/n,
\]
for  all $i\in\{1,2,\ldots,n\}$. The proposition follows.
\end{proof}

Using Proposition~\ref{prop:1} and Lemma~\ref{lem:mcdiarmid} (Equation~(ii)), we have for any $\varepsilon>0$ that
\begin{align*}
	\Pr\left(\frac{M(n)}{n}\ge\E\left[\frac{M(n)}{n}\right]-\varepsilon\right) 
		& ~\ge~ \Pr\left(\frac{M(n)}{n}>\E\left[\frac{M(n)}{n}\right]-\varepsilon\right) \\
		& ~=~ 1 - \Pr\left(\frac{M(n)}{n}-\E\left[\frac{M(n)}{n}\right]\le-\varepsilon\right) \\
		& ~\ge~ 1 - \exp\left(-\frac{2\varepsilon^2}{\sum_{i=1}^{2n}d^2_i}\right) \\
		& ~=~ 1 - \exp\left(-\frac{2\varepsilon^2}{\sum_{i=1}^{n}(4/n)^2+\sum_{i=n+1}^{2n}(2/n)^2}\right) \\
		& ~=~ 1 - \exp\left(-\frac{2\varepsilon^2}{16/n+4/n}\right) \\
		& ~=~ 1 - \exp\left(-\frac{\varepsilon^2n}{10}\right).
\end{align*}
That is, $\lim_{n\rightarrow\infty} \Pr(M(n)\ge \E[M(n)]-n\,\varepsilon)= 1$ for any $\varepsilon>0$. 

By taking $\varepsilon=\beta$ and using that $\E[M(n)]\ge (0.83+\beta)\,n$ (see Section~\ref{sec:expectation}), we obtain that
\begin{equation}\label{eq2}
\lim_{n\rightarrow\infty} \Pr(M(n)\ge 0.83\,n)= 1.
\end{equation}

We strengthen now Equation~\eqref{eq2}, by using Fekete's Lemma~\cite{fekete1923} and Borel-Cantelli Lemma \cite[Thm.\ 10.10]{Folland1999}.%~\cite{borel,cantelli}.

\begin{lemma}[Existence of the limit] \label{lemma:limit_exists}
	The following limit exists
	\[
		\lim_{n \to \infty} \frac{ \E[M(n)] }{n}. 
	\]
	Moreover,
	\[
	\lim_{n \to \infty} \frac{\E[M(n)]}{n}  = \sup_{n \in \mathbb{N}} \frac{ \E[M(n)] }{n}.
	\]
\end{lemma}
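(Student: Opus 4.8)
The plan is to derive the lemma from Fekete's Lemma~\cite{fekete1923} in its superadditive form: if a sequence $(a_n)_{n\ge1}$ satisfies $a_{n+m}\ge a_n+a_m$ for all $n,m\ge1$, then $\lim_{n\to\infty}a_n/n$ exists in $(-\infty,+\infty]$ and equals $\sup_{n\ge1}a_n/n$. I would apply this with $a_n=\E[M(n)]$. Since $M(n)\le n$ with probability one, $\sup_{n}\E[M(n)]/n\le1<\infty$, so the limit is finite; thus it suffices to prove that $\bigl(\E[M(n)]\bigr)_{n\ge1}$ is superadditive, i.e.\ $\E[M(n+m)]\ge\E[M(n)]+\E[M(m)]$ for all $n,m\ge1$.

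To prove superadditivity I would use a \emph{vertical strip decomposition}. Fix $n,m\ge1$ and realize the random colored $(n+m)$-point set $S$ as in Section~\ref{sec:concentration}, namely $S=\{(i/(n+m),Y_i):i=1,\dots,n+m\}$ with $Y_1,\dots,Y_{n+m}$ i.i.d.\ Uniform$([0,1])$ and independent uniform colors. Let $L$ consist of the first $n$ points and $\mathrm{R}$ of the last $m$ points in the $x$-order. Choose a maximum matching of $L$ viewed as a point set on its own, and likewise a maximum matching of $\mathrm{R}$, taking in both cases every rectangle to be the axis-aligned bounding box of its two covered points; this is legitimate because shrinking a rectangle of a matching to the bounding box of the two points it covers still leaves a pairwise-disjoint family in which each rectangle covers exactly those two points. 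The rectangles of the matching of $L$ are then contained in $[0,n/(n+m)]\times[0,1]$ and those of $\mathrm{R}$ in $[(n+1)/(n+m),1]\times[0,1]$; since $n/(n+m)<(n+1)/(n+m)$, the union of the two matchings is a pairwise-disjoint family of axis-aligned rectangles, each covering exactly two equally colored points of $S$ --- a feasible matching of $S$. Hence, for every realization,
\[
M(n+m)~\ge~M_L+M_{\mathrm{R}},
\]
where $M_L$ and $M_{\mathrm{R}}$ are the numbers of points covered by the chosen maximum matchings of $L$ and of $\mathrm{R}$, respectively.

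Next I would identify the laws of $M_L$ and $M_{\mathrm{R}}$. The number of points covered by a maximum matching of a point set is invariant under a common strictly increasing transformation of all $x$-coordinates; applying such a transformation to $L$ turns it into an instance of the $n$-point model used to define $M(n)$ (the $Y_i$ and the colors being unchanged), so $M_L\stackrel{d}{=}M(n)$, and likewise $M_{\mathrm{R}}\stackrel{d}{=}M(m)$. Taking expectations in the displayed inequality and using linearity of expectation yields $\E[M(n+m)]\ge\E[M(n)]+\E[M(m)]$, and Fekete's Lemma then gives both assertions of the lemma.

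I expect the only delicate step to be the distributional identification in the third paragraph: one has to check carefully that the maximum-matching count really is invariant under strictly increasing relabelings of the $x$-axis, and that almost surely the $Y_i$ are pairwise distinct, so that no boundary degeneracy changes which pairs of points admit a rectangle covering exactly them. Everything else --- the strip decomposition and the application of Fekete's Lemma --- is routine.
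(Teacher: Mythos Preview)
Your proposal is correct and follows the same route as the paper: show that $(\E[M(n)])_n$ is superadditive by splitting the random $(n+m)$-point set into its leftmost $n$ and rightmost $m$ points, observing that the union of maximum matchings on the two halves is a feasible matching of the whole, and then invoke Fekete's Lemma. You are in fact more careful than the paper on two minor points --- shrinking rectangles to bounding boxes so that the two sub-matchings live in disjoint vertical strips and hence cannot overlap or cover foreign points, and noting $\sup_n\E[M(n)]/n\le1$ so the Fekete limit is finite --- but the argument is essentially identical.
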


\begin{proof}
	A sequence  $\big(a_n \big)_n$ is {\em superadditive} if $a_{n+m}\ge a_n+a_m$ for all $n, m \in \mathbb{N}$. The Fekete's Lemma~\cite{fekete1923} states that if $\big(a_n \big)_n$ is superadditive, then $\lim_{n \to \infty} \frac{ a_n}{n}=\sup_{n \in \mathbb{N}} \frac{ a_n}{n}$.
	
	Let $S$ be a random set of $n + m$ colored points.
	We can build a maximum matching of the first $n$ left-to-right points of $S$, and a maximum matching of the other $m$ points of $S$.
	Let us denote by $M(n)$ and $\tilde{M}(m)$ the number of points matched by each case, respectively.
	Note that $M(m)$ and $\tilde{M}(m)$ follows the same law.
	Besides, the union of both matchings gives us a feasible matching of $S$ covering exactly $M(n) + \tilde{M}(m)$ points.
	Hence, $M(n + m) \ge M(n) + \tilde{M}(m)$, given that $M(n+m)$ is the optimum. 
	Applying the expectation in both terms of the inequality, we have that
	\[
		\E[M(n + m)] \ge \E[M(n) + \tilde{M}(m)] = \E[M(n)] + \E[M(m)].
	\]
	Therefore, the sequence $(\E[M(n)])_n$ is superadditive, and the lemma thus follows.
%	First note that the sequence  $\big( M(n) \big)_n$ is superadditive, meaning that 
%	\(
%		M(n + m) \ge M(n) + M(m).
%	\)
%	for all $n, m \in \mathbb{N}$. 
%	Indeed, let $S$ be a random set of $n + m$ colored points. If we build a maximum matching of the first $n$ left-to-right points of $S$, and a maximum matching of the other $m$ points of $S$, we will obtain a feasible matching of $S$ which covers exactly $M(n)+M(m)$ points. Hence, $M(n + m) \ge M(n) + M(m)$ given that $M(n+m)$ is the optimum. Applying the expectation in both terms of the inequality, we have that $\E[M(n + m)] \ge \E[M(n) + M(m)] = \E[M(n)] + \E[M(m)]$. Therefore, the sequence $(\E[M(n)])_n$ is superadditive, and the lemma thus follows.
\end{proof}

Using Proposition~\ref{prop:1} and Lemma~\ref{lem:mcdiarmid} (Equation~(iii)), we have for any $\varepsilon>0$ that
\begin{equation}\label{eq3}
	\Pr\left( \left| \frac{M(n)}{n} - \E\left[\frac{M(n)}{n}\right] \right| \ge  \varepsilon\right)
 ~\le~ 2 \exp\left(-\frac{2\varepsilon^2}{\sum_{i=1}^{2n}d^2_i}\right) 
	~=~ 2 \exp\left(-\frac{\varepsilon^2n}{10}\right). 
\end{equation}
Using now Equation~\eqref{eq3} and Lemma \ref{lemma:limit_exists}, we can state the next lemma, whose proof is deferred to the appendix, for completeness.
\begin{lemma}\label{lem:borel}
	For any $\varepsilon>0$, we have that
	\[
		\sum_{n = 1}^{\infty} 	\Pr\left( \left| \frac{M(n)}{n} - \sup_{m \in \mathbb{N}} \E\left[\frac{M(m)}{m}\right] \right| >  \varepsilon\right) < \infty.
	\]
\end{lemma}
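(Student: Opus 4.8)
The plan is to combine the concentration inequality in Equation~\eqref{eq3} with the convergence of $\E[M(n)]/n$ to the supremum established in Lemma~\ref{lemma:limit_exists}, and then sum the resulting tail bounds. First I would abbreviate $L := \sup_{m \in \mathbb{N}} \E[M(m)]/m$, which is finite since $M(m)/m \le 1$ always, and which by Lemma~\ref{lemma:limit_exists} equals $\lim_{n\to\infty} \E[M(n)]/n$. Fix $\varepsilon > 0$. Because $\E[M(n)]/n \to L$, there is an $n_0 = n_0(\varepsilon)$ such that for all $n \ge n_0$ we have $\bigl| \E[M(n)]/n - L \bigr| < \varepsilon/2$. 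For such $n$, the triangle inequality gives
\[
	\left\{ \left| \frac{M(n)}{n} - L \right| > \varepsilon \right\} ~\subseteq~ \left\{ \left| \frac{M(n)}{n} - \E\left[\frac{M(n)}{n}\right] \right| > \frac{\varepsilon}{2} \right\},
\]
so that $\Pr\bigl( |M(n)/n - L| > \varepsilon \bigr) \le \Pr\bigl( |M(n)/n - \E[M(n)/n]| > \varepsilon/2 \bigr) \le 2\exp(-\varepsilon^2 n / 40)$ by Equation~\eqref{eq3} applied with $\varepsilon/2$ in place of $\varepsilon$.

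Next I would split the series at $n_0$: the first $n_0 - 1$ terms are a finite sum of probabilities, each at most $1$, hence contribute at most $n_0 - 1 < \infty$; and the tail $\sum_{n \ge n_0} \Pr(|M(n)/n - L| > \varepsilon)$ is bounded by $\sum_{n \ge n_0} 2\exp(-\varepsilon^2 n / 40)$, a convergent geometric series (ratio $\exp(-\varepsilon^2/40) < 1$). Adding the two pieces shows the full series is finite, which is exactly the claim of Lemma~\ref{lem:borel}.

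I do not anticipate a serious obstacle here; the only point requiring a little care is that the constant $n_0$ depends on $\varepsilon$, so one must isolate the finitely many initial terms before invoking the geometric bound — writing $O_\varepsilon(1)$ for that finite head is harmless. One should also note explicitly that $L$ is finite (again because $M(n)/n \in [0,1]$) so that the event $\{|M(n)/n - L| > \varepsilon\}$ and the centering make sense, and that Equation~\eqref{eq3} is valid for \emph{every} $\varepsilon > 0$, in particular for $\varepsilon/2$. With Lemma~\ref{lem:borel} in hand, the Borel–Cantelli Lemma yields $\Pr\bigl( |M(n)/n - L| > \varepsilon \text{ i.o.} \bigr) = 0$ for each $\varepsilon > 0$; intersecting over a countable sequence $\varepsilon \downarrow 0$ gives $M(n)/n \to L$ almost surely, which together with Lemma~\ref{lemma:limit_exists} and Lemma~\ref{lemma:expec_bound} completes the proof of Theorem~\ref{thm:main}.
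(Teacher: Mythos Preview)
Your proof is correct and follows essentially the same route as the paper: both arguments fix $\varepsilon>0$, use Lemma~\ref{lemma:limit_exists} to find $n_0$ with $|\E[M(n)]/n - L|<\varepsilon/2$ for $n\ge n_0$, reduce via the triangle inequality to the concentration estimate~\eqref{eq3} at level $\varepsilon/2$, and then bound the series by a finite head plus a convergent geometric tail $\sum_{n\ge n_0}2\exp(-\varepsilon^2 n/40)$. The only cosmetic difference is that the paper works through complements ($\Pr(|X_n-L|\le\varepsilon)\ge 1-2r^n$) whereas you use the event inclusion directly; the substance is identical.
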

Let $(X_n)_n$ be a sequence of random variables and $X$ be a constant, such that
\[
	\sum_{n = 1}^{\infty} \Pr \left( |X_n - X| > \varepsilon \right) < \infty
\]
for all $\varepsilon>0$. It is well-known, by the Borel-Cantelli Lemma, that $X_n$ converges to $X$ almost surely as $n\to\infty$. Then, bringing together Lemma~\ref{lemma:expec_bound}, Lemma~\ref{lemma:limit_exists}, and Lemma~\ref{lem:borel}, we easely conclude the proof of Theorem \ref{thm:main}.

{\small

\noindent{\bf Funding}: {\em Pablo P\'erez-Lantero}, Research supported by project DICYT 042332PL Vicerrector\'ia de Investigaci\'on, Desarrollo e Innovaci\'on USACH (Chile).

\noindent{\bf Data and materials availability statement}: Data and materials sharing not applicable to this article as no datasets were generated or analyzed during the current study.

\bibliographystyle{abbrv}
\bibliography{refs}

\begin{thebibliography}{1}

\bibitem{corujo2023matching}
J.~Corujo, D.~Flores-Pe{\~n}aloza, C.~Huemer, P.~P{\'e}rez-Lantero, and
  C.~Seara.
\newblock Matching random colored points with rectangles.
\newblock {\em Journal of Combinatorial Optimization}, 45(2):81, 2023.

\bibitem{doob1940regularity}
J.~L. Doob.
\newblock Regularity properties of certain families of chance variables.
\newblock {\em Transactions of the American Mathematical Society},
  47(3):455--486, 1940.

\bibitem{farago2022approximating}
A.~Farag{\'o}.
\newblock Approximating general discrete stochastic processes by {M}arkov
  chains.
\newblock {\em Journal of Statistics and Computer Science}, 1:135--145, 2022.

\bibitem{fekete1923}
M.~Fekete.
\newblock {\"U}ber die {V}erteilung der {W}urzeln bei gewissen algebraischen
  {G}leichungen mit ganzzahligen {K}oeffizienten.
\newblock {\em Mathematische Zeitschrift}, 17(1):228--249, 1923.

\bibitem{Folland1999}
G.~B. Folland.
\newblock {\em Real analysis. {Modern} techniques and their applications.}
\newblock Pure Appl. Math., Wiley-Intersci. Ser. Texts Monogr. Tracts. New
  York, NY: Wiley, 2nd ed. edition, 1999.

\bibitem{norris1998markov}
J.~R. Norris.
\newblock {\em Markov chains}, volume~2 of {\em Cambridge Series in Statistical
  and Probabilistic Mathematics}.
\newblock Cambridge University Press, Cambridge, 1998.
\newblock Reprint of 1997 original.

\end{thebibliography}

}

\appendix

\section{Proof of Lemma~\ref{lem:borel}}

\begin{proof}
Let $X_n=M(n)/n$, $Y_n=\E[M(n)]/n$, and $L=\sup_n\E[M(n)]/n$. Let $\varepsilon>0$ be a constant. From Equation~\eqref{eq3}, we have that 
\[
	\Pr\left(|X_n-Y_n|\ge \varepsilon/2\right) ~\le~ 2\, r^n, ~\text{where}~ r=\exp\left(-\frac{\varepsilon^2}{40}\right)<1.
\]
From Lemma~\ref{lemma:limit_exists}, we have for $n\ge n_0$, where $n_0\in\mathbb{N}$ is a constant, that 
\[
	\Pr\left(|Y_n-L|<\varepsilon/2\right) ~=~ 1.
\]
For $n\ge n_0$, we obtain that
\begin{align*}
	\Pr(|X_n-L|\le\varepsilon) 
		& ~\ge~ \Pr(|X_n-Y_n|+|Y_n-L|\le\varepsilon) \\
		& ~\ge~ \Pr\left(|X_n-Y_n|<\varepsilon/2,|Y_n-L|<\varepsilon/2 \right) \\
		& ~=~ \Pr\left(|X_n-Y_n|<\varepsilon/2 \right) \\
		& ~\ge~ 1 - 2\, r^n.
\end{align*}
Hence,
\begin{align*}
	\sum_{n=1}^{\infty}\Pr(|X_n-L|>\varepsilon)
	& ~<~ (n_0-1) + \sum_{n=n_0}^{\infty}\Pr(|X_n-L|>\varepsilon) \\
	& ~\le~ (n_0-1) + \sum_{n=n_0}^{\infty} 2\, r^n \\
	& ~=~ (n_0-1)+\frac{2\,r^{n_0}}{1 - r} \\
	& ~<~ \infty.
\end{align*}
The lemma thus follows.
\end{proof}

\end{document}